\documentclass[10pt, conference]{IEEEtran}

\IEEEoverridecommandlockouts

\usepackage{cite}
\usepackage{amsmath,amssymb,amsfonts}
\usepackage{graphicx}
\usepackage{textcomp}
\usepackage{xcolor}
\usepackage{amsmath}
\usepackage{verbatim}
\usepackage{overpic}
\usepackage{algorithm}
\usepackage{algpseudocode}
\usepackage{graphics}
\usepackage{epsfig}
\usepackage{algorithm,algpseudocode}
\setlength{\topmargin}{-0.7in}

\def\BibTeX{{\rm B\kern-.05em{\sc i\kern-.025em b}\kern-.08em
    T\kern-.1667em\lower.7ex\hbox{E}\kern-.125emX}}


\usepackage{amsthm}
\newtheorem{theorem}{Theorem}
\newtheorem{assumption}{Assumption}
\usepackage{nidanfloat}
\usepackage{changepage}
\usepackage{bbm}

\IEEEoverridecommandlockouts\IEEEpubid{\makebox[\columnwidth]{ 978-1-6654-3540-6/22~\copyright~2022 IEEE \hfill} \hspace{\columnsep}\makebox[\columnwidth]{ }}

\begin{document}

\title{UAV-Assisted Hierarchical Aggregation for Over-the-Air Federated Learning}

\author{\IEEEauthorblockN{Xiangyu Zhong, Xiaojun Yuan, \IEEEmembership{Senior Member, IEEE}, Huiyuan Yang, Chenxi Zhong}
		\IEEEauthorblockA{\text{Yangtze Delta Region Institute, University of Electronic Science and Technology of China, Huzhou, China} \\
		National Key Lab of Sci. and Tech. on Communications, University of Electronic Sci. and Tech. of China, Chengdu, China \\ 
		Emails: xyzh@std.uestc.edu.cn, xjyuan@uestc.edu.cn, \{hyyang, cxzhong\}@std.uestc.edu.cn}
}

\maketitle

\begin{abstract}
	With huge amounts of data explosively increasing on the mobile edge, over-the-air federated learning (OA-FL) emerges as a promising technique to reduce communication costs and privacy leak risks. However, when devices in a relatively large area cooperatively train a machine learning model, the attendant straggler issue will significantly reduce the learning performance. In this paper, we propose an unmanned aerial vehicle (UAV) assisted OA-FL system, where the UAV acts as a parameter server (PS) to aggregate the local gradients hierarchically for global model updating. Under this UAV-assisted hierarchical aggregation scheme, we carry out a gradient-correlation-aware FL performance analysis. We then formulate a mean squared error (MSE) minimization problem to tune the UAV trajectory and the global aggregation coefficients based on the analysis results. An algorithm based on alternating optimization (AO) and successive convex approximation (SCA) is developed to solve the formulated problem. Simulation results demonstrate the great potential of our UAV-assisted hierarchical aggregation scheme.
\end{abstract} 


\section{Introduction}
In the era of big data, the explosion of distributed edge data exposes the weaknesses of centralized machine learning (ML), which requests edge devices to upload local data to a central server for training, leading to huge communication costs and high data leakage risks. Federated learning (FL) has recently arisen as a promising distributed learning paradigm to tackle these challenges by replacing data transmission with model/gradient exchange \cite{konevcny2016federated}. Specifically, in each FL training iteration, the PS first broadcasts the global model parameters to the devices, and the latter then calculates their local gradients separately based on the local datasets and uploads the local gradients to the PS. Subsequently, the PS aggregates the received local gradients and updates the global model. 

However, due to the iterative nature, the huge communication cost is still a bottleneck for FL. A promising approach to reduce the communication cost of FL in the wireless edge is to  introduce the over-the-air computation (AirComp) technique \cite{nazer2007computation} into FL uplink \cite{yang2020federated}. By utilizing the superposition property of wireless signals, AirComp allows the uplink cost to be unchanged with the growth of the device number  (with fixed aggregation accuracy), thereby significantly reducing the communication cost of large FL systems. The corresponding FL paradigm is referred to as over-the-air FL (OA-FL).

Despite these advantages, the introduction of AirComp also brings some unique problems, such as the straggler issue \cite{zhu2019broadband, liu2021reconfigurable, zhong2021over}. Since AirComp requires local gradients to be aligned at the PS (i.e., the PS receives a linear combination of local gradients with desired coefficients), the devices with good channel conditions have to lower their transmitting powers to match the devices with relatively poor channel conditions (i.e., the stragglers) to ensure correct aggregation. Existing works like \cite{zhu2019broadband} and \cite{liu2021reconfigurable} discard the stragglers to relieve the straggler issue. However, such a coarse-grained discarding strategy will inevitably cast away some exploitable information, resulting in a learning performance loss. Ref. \cite{zhong2021over} suggests a soft aggregation approach to tackle the straggler issue without discarding devices, which, however, works well only in small service area cases (e.g., $100\times 100$ m$^2$). When it comes to a relatively large service area (e.g., $2\times 2$ km$^2$), it fails due to the large difference in path loss between nearby devices and those far from the PS (which results in the de facto discarding of remote devices).

Allowing PS mobility is a natural idea to perform OA-FL in a large service area since the PS can thus move close to the stragglers for better services. Therefore, in this paper, an unmanned aerial vehicle (UAV) is introduced as the PS. Due to the ground-to-air nature of UAV uplink communication, the device-UAV links are generally not blocked, potentially leading to good channel conditions \cite{lin2018sky,wu2018joint}. Besides, UAVs have extremely high mobility and can thus cover a large area quickly. There are already some works on UAV-assisted FL. For example, Ref. \cite{lim2021uav} employs a UAV as a mobile relay in the sky to assist FL; in \cite{donevski2021federated}, a UAV acts as an orchestrator, coordinating the transmissions and the learning schedule within a preset deadline. 
However, these works consider orthogonal FL uplink and thus have various limitations in device number, service area, and communication condition, which means the state-of-art design is far from fully unleashing the potential of UAV-assisted FL.

\begin{figure}
	[t]
	\centering
	\includegraphics[height=0.205\textheight]{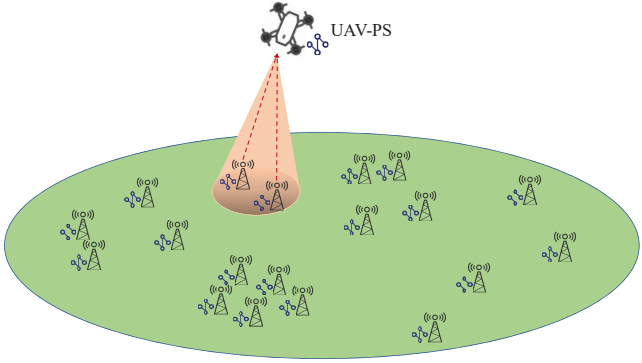}
	\caption{UAV-assisted OA-FL in a large service area.}
	\label{modelscheme}
	\vspace{-1.0em}
\end{figure}

In this paper, we make the first attempt to employ the UAV to assist the OA-FL system by designing a hierarchical over-the-air aggregation scheme.
As shown in Fig. \ref{modelscheme}, we consider a UAV-assisted OA-FL system, where the UAV acts as the PS, denoted by UAV-PS, to serve devices distributed in a relatively large area. To avoid serious straggler issues, we restrict the UAV-PS to serve nearby devices, i.e., only aggregate the local gradients from nearby devices over the air. Thus the UAV-PS must fly across its large service area to serve more devices. After the UAV-PS completes one round, it further aggregates the received partially aggregated local gradients to obtain a noisy version of the desired global gradient for global model updating. We call this two-step aggregation hierarchical aggregation. Under this hierarchical gradient aggregation scheme, we carry out a gradient-correlation-aware FL convergence analysis, which leads us to formulate a mean squared error (MSE) minimization problem to tune the UAV trajectory and the global aggregation coefficients. An algorithm based on alternating optimization (AO) and successive convex approximation (SCA) is proposed to solve the formulated problem. 
Simulation results demonstrate that our proposed scheme achieves robust performance improvement in complicated FL scenarios compared with existing solutions.


\section{System Model}
In this section, we detail the underlying models and the hierarchical aggregation scheme.

\subsection{Federated Learning Model}

We consider an FL system where $M$ devices collaboratively train a machine learning model assisted by a UAV-PS. Let $\mathbf{w} \in \mathbb{R}^D$ be the parameter vector of the machine learning model and $\mathcal{D}_m$ be the dataset on device $m$. Also let $Q_m \triangleq |\mathcal{D}_m|$ and $Q \triangleq \sum_{m=1}^M Q_m$. Then the global loss function can be written as
\begin{equation}
	\label{LossFunc}
	F(\mathbf{w}) = \sum_{m=1}^{M} b_m F_m \left( \mathbf{w}\right)
\end{equation}
with the local loss function
\begin{equation}
	\label{localLF}
	F_m(\mathbf{w}) = \frac{1}{Q_m} \sum_{q=1}^{Q_m} f ( \mathbf{w}; \boldsymbol{\xi}_{m,q} ),
\end{equation}
where $f ( \mathbf{w}; \boldsymbol{\xi}_{m,q} )$ denotes the sample-wise loss function based on the $q$-th data sample $\boldsymbol{\xi}_{m,q} \in \mathcal{D}_m$ and $b_m \triangleq Q_m/Q$.

FL requires, say, $T$ training rounds for convergence. The $t$-th training round consists of four steps:

\begin{itemize}
	\item \textit{Global model broadcasting}: The UAV-PS broadcasts the current global model parameters, $\mathbf{w}^{(t)}$, to $M$ devices in the service area. 
	\item \textit{Local gradient computation}: Each device computes its local gradient by
	\begin{gather}
		\mathbf{g}_{m}^{(t)}=\nabla F_m(\mathbf{w}^{(t)}),
	\end{gather}
	where $\nabla F_m(\mathbf{w}^{(t)})$ denotes the gradient of the local loss function.
	\item \textit{Hierarchical gradient aggregation}: The UAV-PS flies across the service area to hierarchically aggregate the local gradients to obtain $\mathbf{\hat{g}}^{(t)}$, which is a noisy version of the desired global gradient
	\begin{gather}
			\mathbf{g}^{(t)}\triangleq\sum_{m=1}^{M}b_m\mathbf{g}_{m}^{(t)}\label{idealtri}.
	\end{gather}
	\item \textit{Global model updating}: The UAV-PS updates the global model $\mathbf{w}^{(t)}$ by
	\begin{gather}
		\mathbf{w}^{(t+1)} = \mathbf{w}^{(t)} - \eta\mathbf{\hat{g}}^{(t)},
	\end{gather}
	where $\eta$ denotes the learning rate.
\end{itemize}

\subsection{UAV Channel Model}
\label{subsec_channel}
In this paper, we consider a three-dimensional Cartesian coordinate system. Assume that the $m$-th device is located on the ground with horizontal coordinate $\mathbf{v}_m=[x_m, y_m]^\top, m\in M$. Also assume that the UAV-PS flies at a fixed height $z$ above the ground with horizontal coordinate $\mathbf{u}(\tau)=[x(\tau),y(\tau)]^\top, 0\le\tau\le \Delta t$, at instant $\tau$, where  $\Delta t$ is the duration required by the UAV-PS for flying one round. In each training round, the UAV trajectory satisfies the following constraints:
\begin{align}
	\mathbf{u}(0)&=\mathbf{u}(\Delta t),\label{u0}\\
	\lVert \mathbf{\dot{u}}(\tau)	\rVert&\le V_{\mathrm{max}}, 0\le\tau\le \Delta t, \label{utv}
\end{align}
where $V_{\mathrm{max}}$ denotes the maximum speed of the UAV-PS. Note that (\ref{u0}) ensures that the UAV-PS starts and ends at the same point and that (\ref{utv}) imposes the practical maximal speed constraint. We discretize the duration $\Delta t$ into $N$ time slots indexed by $n$ with equal interval $\delta=\Delta t/N$. We assume that $\delta$ is sufficiently small such that the UAV-PS can be considered stationary in a time slot, even if it flies at the maximum speed $V_{\mathrm{max}}$. Therefore the trajectory constraints are rewritten as
\begin{align}
	\mathbf{u}[0]&=\mathbf{u}[N],\label{un}\\
	\lVert \mathbf{u}[n+1]-\mathbf{u}[n] \rVert ^2&\le (V_{\mathrm{max}}\delta)^2, n=0, 2, ..., N-1. \label{unv}
\end{align}

We assume the Doppler effect caused by the mobility of the UAV-PS can be well compensated at the receiver and perfect channel state information (CSI) is accessible at devices \cite{cao2020cooperative}. Further, due to the ground-to-air nature of UAV uplink communication, we consider the line-of-sight (LoS) channels for the device-UAV links. Thus the channel from the $m$-th device to the UAV-PS during the $n$-th time slot follows the free-space path loss model, expressed as
\begin{gather}
	h_m[n]=\sqrt{\varrho d_m^{-2}[n]}\vartheta_m[n], \label{channel}
\end{gather}
where $\varrho$ denotes the channel power gain at the reference distance $d_{\mathrm{ref}}=1$ m, $\vartheta_m[n]\triangleq e^{j\theta_m[n]}$ with $\theta_m[n]$ denoting the phase of $h_m[n]$,
and $d_m[n]=\sqrt{z^2+\|\mathbf{u}[n]-\mathbf{v}_m\|^2}$ denotes the distance between the device $m$ and the UAV-PS at the $n$-th time slot.

\subsection{Transmitting Signal Model}

As mentioned before, the UAV-PS only serves nearby devices. We now define ``nearby'' precisely. At the $n$-th time slot, given a preset coverage distance threshold $d_{\mathrm{thr}}$, we define set $\mathcal{M}[n]\triangleq \{m \in [M] \ |\  \|\mathbf{u}[n]-\mathbf{v}_m\|\le d_{\mathrm{thr}}\}$. The devices with indexes in $\mathcal{M}[n]$ are referred to as the nearby devices.
	
At the $n$-th time slot, only the nearby devices need to transmit signals. Assume that device $m$ is a nearby device. We now introduce how device $m$ generates its transmitting signal.
It first normalizes the local gradient $\mathbf{g}_m^{(t)}$ for efficient transmission.
Specifically, each entry of the normalized gradient $\tilde{\mathbf{g}}_m^{(t)}$ is computed by
\begin{gather}
	\tilde{g}_m^{(t)}(d)=\frac{g_m^{(t)}(d)-\bar{g}_m^{(t)}}{\sqrt{\upsilon_m^{(t)}}}, d\in [D],
	\label{normar}
\end{gather}
where $g_m^{(t)}(d)$ denotes the $d$-th entry of $\mathbf{g}_m^{(t)}$, $\bar{g}_m^{(t)}$ and $\upsilon_m^{(t)}$ are computed by $\bar{g}_m^{(t)}=\sum_{d=1}^D g_m^{(t)}(d)/D$ and $\upsilon_m^{(t)}=\sum_{d=1}^D(g_m^{(t)}(d)-\bar{g}_m^{(t)})^2/D$ respectively. Since the cost of transmitting the scalars $\{\bar{g}_m^{(t)}, \upsilon_m^{(t)}\}$ is negligible compared to that of transmitting the gradients, we assume that these scalars are sent to the UAV-PS losslessly following \cite{lin2021deploying} and \cite{yang2022federated}.


The normalized gradient $\tilde{\mathbf{g}}_m^{(t)}$ is then modulated as a complex vector $\mathbf{r}_{m}^{(t)}$ by
\begin{gather}
	\mathbf{r}_{m}^{(t)} \triangleq 
	\tilde{\mathbf{g}}_{m}^{(t)}\left (1:C \right ) + 
	j \tilde{\mathbf{g}}_{m}^{(t)}\left ((C+1) :2C\right ) \in \mathbb{C}^{C }, \label{pin}
\end{gather}
where $C=D/2$ and $\tilde{\mathbf{g}}_{m}^{(t)}\left (c_1:c_2 \right)$ denotes the sub-vector of $\tilde{\mathbf{g}}_{m}^{(t)}$ containing the entries with indexes from $c_1$ to $c_2$.\footnote{We assume an even $D$ for simplicity.} Then the transmitting signal is given by
\begin{gather}
	\mathbf{x}_m^{(t)}[n]\triangleq \beta_m[n]\mathbf{r}_{m}^{(t)}, \label{fa}
\end{gather}
where $\beta_m[n]\in\mathbb{C}$ is the transmitting coefficient satisfying the power constraints $|\beta_m[n]|^2 = P_0$.\footnote{We assume that all the devices transmit with full power, which is reasonable since the mobility of the UAV-PS allows the signals to be approximately aligned with desired coefficients (by designing an appropriate trajectory).}

\subsection{Hierarchical Aggregation}
\label{Hierarchical_Over-the-Air_Model_Aggregation}

In our hierarchical aggregation scheme for the UAV-assisted OA-FL, the local gradients are aggregated in two steps, called \emph{over-the-air partial aggregation} and \emph{global aggregation}. In this subsection, we separately introduce the two steps.

\subsubsection{Over-the-air partial aggregation}
In this step, the UAV-PS flies around to receive the partially aggregated local gradients for $N$ times. Specifically, the received signal at the $n$-th time slot is given by
\begin{align}
	\mathbf{y}^{(t)}[n]=\sum_{m\in \mathcal{M}[n]} h_m[n]\mathbf{x}_m^{(t)}[n]+\mathbf{n}^{(t)}[n], \label{yhx}
\end{align}
where $\mathbf{n}^{(t)}[n]\in \mathbb{C}^{C}$ is the additive white Gaussian noise (AWGN) with elements independently drawn from $\mathcal{CN}(0,\sigma^2)$.

For convenience, we introduce a binary indicator variable $\alpha_m[n] \triangleq \mathbbm{1}_{\|\mathbf{u}[n]-\mathbf{v}_m\|\le d_{\mathrm{thr}}}(\mathbf{u}[n])$, where the indicator function $\mathbbm{1}_{\|\mathbf{u}[n]-\mathbf{v}_m\|\le d_{\mathrm{thr}}}(\cdot)$ takes $1$ when $\mathbf{u}[n]$ satisfies $\|\mathbf{u}[n]-\mathbf{v}_m\|\le d_{\mathrm{thr}}$. Then the received signal can be rewritten as
\begin{align}
	\label{received_signal}
	\mathbf{y}^{(t)}[n]=\sum_{m=1}^{M} \alpha_m[n]h_m[n]\mathbf{x}_m^{(t)}[n]+\mathbf{n}^{(t)}[n].
\end{align}

\vspace{-0.1cm}
By substituting \eqref{channel} and \eqref{fa} into \eqref{received_signal}, we have
\begin{align}
	\mathbf{y}^{(t)}[n]=\sum_{m=1}^ {M} \frac{\sqrt{\varrho} \alpha_m[n] \beta_{m}[n] \vartheta_m[n] }{\sqrt{z^2+\|\mathbf{u}[n]-\mathbf{v}_m\|^2}} \mathbf{r}_{m}^{(t)}[n]	+ \mathbf{n}^{(t)}[n].
	\label{rhoppr}
\end{align}

Further, we set $\beta_m[n]=\overline{\vartheta_m[n]}\sqrt{P_0}$ to compensate for the phase shift of the channel. Thus the partially aggregated signal can finally be rewritten as
\vspace{-0.1cm}
\begin{gather}
	\mathbf{y}^{(t)}[n]=\sum_{m=1}^ {M} \frac{\sqrt{\varrho P_0} \alpha_m[n]} {\sqrt{z^2+\|\mathbf{u}[n]-\mathbf{v}_m\|^2}} \mathbf{r}_{m}^{(t)}[n]	+ \mathbf{n}^{(t)}[n]. \label{yfinal}
\end{gather}

\subsubsection{Global aggregation}
After the UAV-PS completes one round, it aggregates $\{\mathbf{y}^{(t)}[n]\}_{n=1}^N$ using coefficients $\{\zeta[n]\in\mathbb{R}\}_{n=1}^N$ to obtain globally aggregated signal $\mathbf{a}^{(t)}$, i.e.,
\begin{gather}
	\mathbf{a}^{(t)}=\sum_{n=1}^ {N}\zeta[n]\mathbf{y}^{(t)}[n]. \label{ay}
\end{gather}

\vspace{-0.1cm}
Then $\mathbf{\hat{g}}^{(t)}$ is computed by
\vspace{-0.08cm}
\begin{gather}
	\mathbf{\hat{g}}^{(t)} = \left[ \Re\{\mathbf{a}^{(t)}\}^\top , \, \Im\{\mathbf{a}^{(t)}\}^\top \right]^\top + \bar{g}^{(t)}\mathbf{1}_{D  }, \label{chai}
\end{gather}
where $ \bar{g}^{(t)}\triangleq\sum_{m=1}^M b_m\bar{g}_m^{(t)}$.

\section{Performance Analysis}

For the convenience of convergence analysis, we first introduce two standard assumptions \cite{friedlander2012hybrid, liu2021reconfigurable}:
\begin{assumption}
	\label{assum1}
	The global loss function $F$ is strongly convex with positive parameter $\mu$: $F(\mathbf{w}) 
	\geq F(\mathbf{w^\prime}) + (\mathbf{w} - \mathbf{w^\prime})^{\top} \nabla F(\mathbf{w^\prime}) + \frac{\mu}{2} \| \mathbf{w} - \mathbf{w^\prime}\|^2, \forall \mathbf{w}, \mathbf{w^\prime} \in \mathbb{R}^D.$
\end{assumption}
\begin{assumption}
	\label{assum2}
	The 
	gradient $\nabla F(\cdot)$ is uniformly Lipschitz continuous with parameter $\omega$, i.e., $\| \nabla F(\mathbf{w}) - \nabla F(\mathbf{w^\prime})\| \leq \omega \|\mathbf{w} - \mathbf{w^\prime} \|, \forall \mathbf{w}, \mathbf{w^\prime} \in \mathbb{R}^D$.
\end{assumption}

	Following \cite{zhong2021over}, we further assume the following gradient correlation model:
	\begin{assumption}
		\label{assum4}
		(Gradient correlation model) Define $\tilde{\mathbf{G}}^{(t)} \triangleq [\tilde{\mathbf{g}}_{1}^{(t)}, \cdots, \tilde{\mathbf{g}}_{M} ^{(t)}] \in \mathbb{R}^{D \times M}$. Let $\mathbf{z}_{d}^{(t)\top} \in \mathbb{R}^{M}$ denote the $d$-th row of $\tilde{\mathbf{G}}^{(t)}$, $\forall d \in [D]$. We assume $\{\mathbf{z}_{d}^{(t)}\}_{d=1}^D$ to be independent and identically distributed (i.i.d.) with $\mathbb{E}[\mathbf{z}_{d}^{(t)}] = \mathbf{0}$ and $\mathbb{E}[\mathbf{z}_{d}^{(t)}(\mathbf{z}_{d}^{(t)})^\top] = \boldsymbol{\rho}^{(t)}$, $\forall d \in [D]$.
	\end{assumption}
	
	An upper bound of the convergence performance is given in the following theorem.
	\vspace{-0.15cm}
	\begin{theorem}
		\label{Theorem:Convergence}
		Set the learning rate $\eta=1/\omega$. Under Assumptions \ref{assum1} and \ref{assum2}, we have:
		\vspace{-0.15cm}
		\begin{gather}
			\label{aaa}
			\mathbb{E}\!\left[\!\mathcal{F}\!\left(\!\mathbf{w}^{(T+1)}\!\right)\!- \!\mathcal{F}\left(\mathbf{w}^*\right)\right]
			\leq \sum_{t=0}^{T}\left(1-\frac{\mu}{\omega}\right)^{T-t}\mathbb{E}[\|\mathbf{e}^{(t)}\|^2]  \notag\\
			+\left(\mathcal{F}\left(\mathbf{w}^{(0)}\right) - \mathcal{F}\left(\mathbf{w}^*\right)\right) \left(1-\frac{\mu}{\omega}\right)^{T+1}, 
		\end{gather}
		where $\mathbf{e}^{(t)}\triangleq \mathbf{g}^{(t)} - \hat{\mathbf{g}}^{(t)}$ and the expectations are taken with respect to the gradients and the AWGN.
	\end{theorem}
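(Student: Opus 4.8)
The plan is to treat $\hat{\mathbf g}^{(t)}$ as a perturbed version of the true full gradient and run the textbook perturbed-gradient-descent argument. The starting point is the observation that, by linearity of the gradient and the decomposition $F=\sum_{m=1}^M b_m F_m$ in \eqref{LossFunc}, the ideal aggregate $\mathbf g^{(t)}=\sum_{m=1}^M b_m\mathbf g_m^{(t)}$ of \eqref{idealtri} equals $\nabla F(\mathbf w^{(t)})$ exactly; hence the global model update $\mathbf w^{(t+1)}=\mathbf w^{(t)}-\eta\hat{\mathbf g}^{(t)}$ can be rewritten as $\mathbf w^{(t+1)}=\mathbf w^{(t)}-\tfrac1\omega\bigl(\nabla F(\mathbf w^{(t)})-\mathbf e^{(t)}\bigr)$, with $\mathbf e^{(t)}=\mathbf g^{(t)}-\hat{\mathbf g}^{(t)}$ as in the theorem.

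First I would apply the $\omega$-smoothness of Assumption~\ref{assum2} in its descent-lemma form, $F(\mathbf w^{(t+1)})\le F(\mathbf w^{(t)})+\nabla F(\mathbf w^{(t)})^\top(\mathbf w^{(t+1)}-\mathbf w^{(t)})+\tfrac\omega2\lVert\mathbf w^{(t+1)}-\mathbf w^{(t)}\rVert^2$. Substituting the update and $\eta=1/\omega$, the first-order terms in $\mathbf e^{(t)}$ cancel, leaving the pathwise bound $F(\mathbf w^{(t+1)})\le F(\mathbf w^{(t)})-\tfrac{1}{2\omega}\lVert\nabla F(\mathbf w^{(t)})\rVert^2+\tfrac{1}{2\omega}\lVert\mathbf e^{(t)}\rVert^2$, which holds for every realization of the gradients and of the AWGN.

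Next I would convert the negative gradient-norm term into a contraction on the optimality gap. Minimizing both sides of the strong-convexity inequality in Assumption~\ref{assum1} over $\mathbf w$ gives the Polyak--\L ojasiewicz inequality $\lVert\nabla F(\mathbf w^{(t)})\rVert^2\ge 2\mu\bigl(F(\mathbf w^{(t)})-F(\mathbf w^*)\bigr)$. Inserting this and subtracting $F(\mathbf w^*)$ yields the one-step recursion $F(\mathbf w^{(t+1)})-F(\mathbf w^*)\le\bigl(1-\tfrac\mu\omega\bigr)\bigl(F(\mathbf w^{(t)})-F(\mathbf w^*)\bigr)+\tfrac{1}{2\omega}\lVert\mathbf e^{(t)}\rVert^2$. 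Since $\mu\le\omega$ (a $\mu$-strongly convex function with $\omega$-Lipschitz gradient satisfies this), the factor $1-\mu/\omega\in[0,1)$ is a genuine contraction; taking expectations over the gradients and the noise and unrolling the recursion from $t=0$ to $T$ as a geometric series then gives \eqref{aaa} (up to the benign multiplicative constant $\tfrac{1}{2\omega}$ on each error term).

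I do not anticipate a real difficulty: the argument is essentially bookkeeping. The one delicate point is the exact cancellation of the first-order noise term $\nabla F(\mathbf w^{(t)})^\top\mathbf e^{(t)}$ in the descent step, which is precisely what forces the choice $\eta=1/\omega$ together with the identity $\mathbf g^{(t)}=\nabla F(\mathbf w^{(t)})$; were either slightly off, a residual cross term would survive and the clean linear recursion would not close. A secondary point worth stating explicitly is that no stochastic independence between $\mathbf e^{(t)}$ and $\mathbf w^{(t)}$ is needed, since the per-round inequality is pathwise and the expectation is taken only at the final unrolling step.
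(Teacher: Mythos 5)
Your proposal is correct and follows essentially the same route as the paper: the one-step recursion you derive via the descent lemma and the Polyak--\L ojasiewicz inequality is exactly Lemma~2.1 of the cited reference \cite{friedlander2012hybrid}, which the paper simply invokes and then unrolls in expectation, just as you do. Your honest remark about the residual $\tfrac{1}{2\omega}$ factor on each error term is well taken --- that factor is indeed present in the cited lemma and is silently absorbed in the theorem's statement, which is harmless for the subsequent MSE-minimization argument.
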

	\begin{proof}
		Take expectation of both sides of the recursion in [\citenum{friedlander2012hybrid}, Lemma~2.1] and iterate it for $t+1$ times.
	\end{proof}
	
	Clearly, by Theorem \ref{Theorem:Convergence}, to improve the learning performance, we only have to minimize the mean squared error (MSE) $\mathbb{E}[\|\mathbf{e}^{(t)}\|^2]$, which can be written as
	\begin{align}
		&\mathbb{E}[\|\mathbf{e}^{(t)}\|^2]
		\notag\\
		=&\mathbb{E}\left[\left\|\mathbf{g}^{(t)} -  \hat{\mathbf{g}}^{(t)}\right\|^2\right]
		\notag \\
		\overset{\text{(a)}}{=}& \mathbb{E}\left[\left\|\sum_{m=1}^{M}b_m g_m^{(t)} -  \sum_{m=1}^{M}b_m\bar{g}^{(t)}\mathbf{1}_{D} - \begin{bmatrix}
			\Re\{\mathbf{a}^{(t)}\}\\
			\Im\{\mathbf{a}^{(t)}\}
		\end{bmatrix} \right\|^2\right]
		\notag \\
		\overset{\text{(b)}}{=}&\mathbb{E}\left[\left\|\sum_{m=1}^{M}\left( b_m\sqrt{\upsilon^{(t)}_m} - \sum_{n=1}^{N}	\frac{\zeta[n]\sqrt{\varrho P_0} \alpha_m[n]} {\sqrt{z^2+\|\mathbf{u}[n]-\mathbf{v}_m\|^2}}  \right) \tilde{\mathbf{g}}_m^{(t)} 
		\right.\right. \notag \\
		&\left.\left. - \sum_{n=1}^{N}\zeta[n]\mathbf{n}_{r}^{(t)}[n] \right\|^2\right],
		\label{Ee}
	\end{align}
	where $\mathbf{n}_{r}^{(t)}[n] \triangleq [\Re\{\mathbf{n}^{(t)}[n]\}^\top, \Im\{\mathbf{n}^{(t)}[n]\}^\top]^\top$, step ($a$) follows from \eqref{idealtri} and \eqref{chai}, and step ($b$) follows from \eqref{normar}, \eqref{pin}, \eqref{yfinal} and \eqref{ay}.
	
	To proceed, define $\boldsymbol{\zeta}\triangleq\left[\zeta[1], ..., \zeta[N]\right]^\top$, $\boldsymbol{\upsilon}^{(t)}\triangleq\Big[b_{1}\sqrt{\upsilon^{(t)}_{1}} , ..., b_{M}\sqrt{\upsilon^{(t)}_{M}}\Big]^\top$, and $\mathbf{K}\in \mathbb{R}^{M\times N}$ with $K_{m,n}=\dfrac{\alpha_m[n]\sqrt{\varrho P_0}}{\sqrt{z^2+\|\mathbf{u}[n]-\mathbf{v}_{m}\|^2}}$. Recall $\tilde{\mathbf{G}}^{(t)} = [\tilde{\mathbf{g}}_{1}^{(t)}, \cdots, \tilde{\mathbf{g}}_{M} ^{(t)}]$. Then \eqref{Ee} can be rewritten as
	\begin{align}
		&\mathbb{E}[\|\mathbf{e}^{(t)}\|^2]
		\notag\\
		=&\mathbb{E}\left[\left\|\tilde{\mathbf{G}}^{(t)} \left(\boldsymbol{\upsilon}^{(t)}- \mathbf{K} \boldsymbol{\zeta}\right) -  \sum_{n=1}^{N}\zeta[n]\mathbf{n}_{r}^{(t)}[n]\right\|^2\right]
		\notag \\
		\overset{\text{(a)}}{=}& D\left(\boldsymbol{\upsilon}^{(t)}- \mathbf{K} \boldsymbol{\zeta}\right)^\top \boldsymbol{\rho}^{(t)} \left(\boldsymbol{\upsilon}^{(t)}- \mathbf{K} \boldsymbol{\zeta}\right) +\frac{D\sigma^2}{2}\boldsymbol{\zeta}^\top\boldsymbol{\zeta},
		\label{ana2}
	\end{align}
	where step ($a$) follows from Assumption \ref{assum4} and the fact that the AWGN is independent of the gradients.

	\section{System optimization}
	
	\subsection{Problem Formulation}
	Define $\mathcal{U}=\{\mathbf{u}[n]\}_{n=1}^N$. The MSE minimization problem can be formulated as
	\begin{subequations} \label{P0}
		\begin{align}
			\min_{\boldsymbol{\zeta}, \mathcal{U}}  \quad & 
			\frac{\sigma^2}{2}\boldsymbol{\zeta}^\top\boldsymbol{\zeta}
			+\left(\boldsymbol{\upsilon}^{(t)}- \mathbf{K} \boldsymbol{\zeta}\right)^\top \boldsymbol{\rho}^{(t)} \left(\boldsymbol{\upsilon}^{(t)}- \mathbf{K} \boldsymbol{\zeta}\right) \\
			\operatorname{ s.t. } \quad
			&\eqref{un}, \ \eqref{unv}. 
		\end{align}
	\end{subequations}
	
	To properly solve problem \eqref{P0}, we first introduce $\mathbf{K}$ as an optimization variable, and then slack the consequent equality constraints. By introducing $\mathbf{K}$, problem \eqref{P0} can be reformulated as
	\begin{subequations} \label{P01}
		\begin{align}
			\min_{\boldsymbol{\zeta}, \mathbf{K}, \mathcal{U}}  \quad & 
			\frac{\sigma^2}{2}\boldsymbol{\zeta}^\top\boldsymbol{\zeta}
			+\left(\boldsymbol{\upsilon}^{(t)}- \mathbf{K} \boldsymbol{\zeta}\right)^\top \boldsymbol{\rho}^{(t)} \left(\boldsymbol{\upsilon}^{(t)}- \mathbf{K} \boldsymbol{\zeta}\right) 
			\label{p00} \\
			\operatorname{ s.t. } \quad
			&K_{m,n}\!=\!\dfrac{\alpha_m[n]\sqrt{\varrho P_0}}{\sqrt{z^2+\|\mathbf{u}[n]-\mathbf{v}_{m}\|^2}}, \forall m \in [M], n\in [N],
			\label{constraint_01}\\
			&\eqref{un}, \ \eqref{unv}.
		\end{align}
	\end{subequations}
	Note that the optimal $\boldsymbol{\zeta}^*$ and $\mathcal{U}^*$ for \eqref{P01} are also optimal for \eqref{P0}. We then relax constraints \eqref{constraint_01} by slacking $\alpha_m[n]$. Recall $\alpha_m[n]=\mathbbm{1}_{\|\mathbf{u}[n]-\mathbf{v}_m\|\le d_{\mathrm{thr}}}(\mathbf{u}[n])$, which is a binary variable. Thus a natural way to slack $\alpha_m[n]$ is to replace the constraint $\alpha_m[n]=\mathbbm{1}_{\|\mathbf{u}[n]-\mathbf{v}_m\|\le d_{\mathrm{thr}}}(\mathbf{u}[n])$ by
	\begin{gather}
		0\le\alpha_m[n]\le\frac{d_{\mathrm{thr}}^2}{d_{\mathrm{thr}}^2+\|\mathbf{u}[n]-\mathbf{v}_m\|^2}.
	\end{gather}
	Constraints \eqref{constraint_01} thus can be relaxed as
	\begin{gather}\label{K_slacking}
		0\!\le\! K_{m,n}\!\le\!\frac{d_{\mathrm{thr}}^2\sqrt{\varrho P_0}} {(d_{\mathrm{thr}}^2+\|\mathbf{u}[n]-\mathbf{v}_m\|^2)(z^2+\|\mathbf{u}[n]-\mathbf{v}_{m}\|^2)^{\tfrac{1}{2}}},
	\end{gather}
	resulting in the following optimization problem:
	\begin{subequations} \label{P1}
		\begin{align}
			\min_{\boldsymbol{\zeta}, \mathbf{K}, \mathcal{U}} \quad & \!\!\!
			\frac{\sigma^2}{2}\boldsymbol{\zeta}^\top\boldsymbol{\zeta}
			+\left(\boldsymbol{\upsilon}^{(t)}- \mathbf{K} \boldsymbol{\zeta}\right)^\top\!\!\! \boldsymbol{\rho}^{(t)}\!\! \left(\boldsymbol{\upsilon}^{(t)}- \mathbf{K} \boldsymbol{\zeta}\right)
			\label{std0}\\
			\operatorname{ s.t. } \quad
			&\eqref{un},\  \eqref{unv},\ \eqref{K_slacking}. 
		\end{align}
	\end{subequations}

	The objective of problem \eqref{P1} is a multi-convex function of $\boldsymbol{\zeta}$ and $\mathbf{K}$, which inspires us to apply the alternating optimization (AO) technique.


	
	\subsection{Optimization of $\{\mathbf{K}, \mathcal{U}\}$ with a fixed $\boldsymbol{\zeta}$}
	
	In this subsection, we optimize $\mathbf{K}$ and $\mathcal{U}$ with a fixed $\boldsymbol{\zeta}=\tilde{\boldsymbol{\zeta}}$. The corresponding optimization problem is
	\begin{subequations} \label{P21}
		\begin{align}
			\min_{\mathbf{K}, \mathcal{U}}  \quad &  \tilde{\boldsymbol{\zeta}}^\top\mathbf{K}^\top \boldsymbol{\rho}^{(t)} \mathbf{K}\tilde{\boldsymbol{\zeta}} - 2 \boldsymbol{\upsilon}^{(t)\top} \boldsymbol{\rho}^{(t)} \mathbf{K}\tilde{\boldsymbol{\zeta}} \\
			\operatorname{ s.t. } \quad
			&\eqref{un}, \ \eqref{unv}, \ \eqref{K_slacking}. 
		\end{align}
	\end{subequations}
	Note that constraint \eqref{K_slacking} is non-convex for $\mathbf{u}[n]$. However, the right side of \eqref{K_slacking} is convex with respect to the whole expression $\|\mathbf{u}[n]-\mathbf{v}_{m}\|^2$. Thus we can obtain a lower bound of the right side of \eqref{K_slacking} by expanding it as its first-order Taylor polynomial with respect to $\|\mathbf{u}[n]-\mathbf{v}_{m}\|^2$ \cite{boyd2004convex} based on the idea of successive convex approximation (SCA), i.e.,
	\begin{align}
		\label{ss}
		&\frac{d_{\mathrm{thr}}^2\sqrt{\varrho P_0}} {(d_{\mathrm{thr}}^2+\|\mathbf{u}[n]-\mathbf{v}_m\|^2)(z^2+\|\mathbf{u}[n]-\mathbf{v}_{m}\|^2)^{\tfrac{1}{2}}}
		\ge 
		\notag \\
		&\ \ \ \ \  \Psi^{(t)} + \Psi'^{(t)}
		\left(\|\mathbf{u}[n]-\mathbf{v}_m\|^2-\|\tilde{\mathbf{u}}[n]-\mathbf{v}_m\|^2\right),
	\end{align}	
	where
	\begin{align*}
		\Psi^{(t)}=\frac{d_{\mathrm{thr}}^2 \sqrt{\varrho P_0}} {(d_{\mathrm{thr}}^2+\|\tilde{\mathbf{u}}[n]-\mathbf{v}_m\|^2)(z^2+\|\tilde{\mathbf{u}}[n]-\mathbf{v}_{m}\|^2)^{\tfrac{1}{2}}}, \\
		\Psi'^{(t)}=\frac{-d_{\mathrm{thr}}^2 \left(1+\frac{1}{2}d_{\mathrm{thr}}^2 + \frac{3}{2} \|\tilde{\mathbf{u}}[n]-\mathbf{v}_{m}\|\right) \sqrt{\varrho P_0} } {(d_{\mathrm{thr}}^2+\|\tilde{\mathbf{u}}[n]-\mathbf{v}_m\|^2)^2(z^2+\|\tilde{\mathbf{u}}[n]-\mathbf{v}_{m}\|^2)^{\tfrac{3}{2}}},
	\end{align*}
	and $\|\tilde{\mathbf{u}}[n]-\mathbf{v}_{m}\|^2$ denotes the expand point (i.e., expanding at point $\|\tilde{\mathbf{u}}[n]-\mathbf{v}_{m}\|^2$). The equality in \eqref{ss} holds when $\mathbf{u}[n] = \tilde{\mathbf{u}}[n]$. Then \eqref{K_slacking} can be transformed into
	\begin{gather}
		0\le K_{m,n}\le\Psi^{(t)} + \Psi'^{(t)}\!	\left(\|\mathbf{u}[n]-\mathbf{v}_m\|^2-\|\tilde{\mathbf{u}}[n]-\mathbf{v}_m\|^2\right).
		\label{kfinal}
	\end{gather}
	Since $\|\mathbf{u}[n]-\mathbf{v}_{m}\|^2$ is a convex function of $\mathbf{u}[n]$ and the right side is a linear function of $\|\mathbf{u}[n]-\mathbf{v}_{m}\|^2$, constraint \eqref{kfinal} is convex. Finally, problem \eqref{P21} can be reformulated into
	\begin{subequations} \label{P2}
		\begin{align}
			\min_{\mathbf{K}, \mathcal{U}}  \quad &  \tilde{\boldsymbol{\zeta}}^\top\mathbf{K}^\top \boldsymbol{\rho}^{(t)} \mathbf{K}\tilde{\boldsymbol{\zeta}} - 2 \boldsymbol{\upsilon}^{(t)\top} \boldsymbol{\rho}^{(t)} \mathbf{K}\tilde{\boldsymbol{\zeta}} \\
			\operatorname{ s.t. } \quad
			&\eqref{un}, \ \eqref{unv}, \ \eqref{kfinal},
		\end{align}
	\end{subequations}
	which is convex and can be efficiently solved by standard convex optimization solvers such as CVX.
	
	\subsection{Optimization of $\boldsymbol{\zeta}$ with fixed $\mathbf{K}$ and $\mathcal{U}$}
	\label{optizeta}
	
	Since \eqref{std0} is convex with respect to $\boldsymbol{\zeta}$ with fixed $\mathbf{K} = \tilde{\mathbf{K}}$ and $\mathcal{U} = \tilde{\mathcal{U}}$, by letting $\dfrac{\partial\mathbb{E}[\|\mathbf{e}^{(t)}\|^2]}{\partial\boldsymbol{\zeta}}=0$, the optimal $\boldsymbol{\zeta}^*$ is given by
	\begin{gather}
		\boldsymbol{\zeta}^*= \left(\frac{\sigma^2}{2}\mathbf{I}_{N\times N} + \tilde{\mathbf{K}}^\top  \boldsymbol{\rho}^{(t)}  \tilde{\mathbf{K}} \right)^{-1}  \tilde{\mathbf{K}}^\top  \boldsymbol{\rho}^{(t)} \boldsymbol{\upsilon}^{(t)}. \label{zetaopt}
	\end{gather}
	
	\subsection{Overall Algorithm}
	By alternatively optimizing $\{\mathbf{K}, \mathcal{U}\}$ and $\boldsymbol{\zeta}$, problem \eqref{P1} can be solved, and thus problem \eqref{P0}. We summarize the proposed algorithm in Algorithm \ref{algorithm}. This algorithm converges since the objective \eqref{std0} is monotonically non-decreasing in the iterative process.
	\begin{algorithm} [t!]
		\caption{Algorithm for Solving Problem \eqref{P0}}\label{algorithm}
		\begin{algorithmic}[1]
			\State Input: $\boldsymbol{\rho}^{(t)}$ and $\boldsymbol{\upsilon}^{(t)}$.
			\State Initialize $\{\mathbf{u}^0[n]\}_{n=1}^N$ and $\boldsymbol{\zeta}^0$ to feasible values, and set $i=0$.
			\Repeat
			\State Set $\tilde{\boldsymbol{\zeta}} = \boldsymbol{\zeta}^i$ and $\tilde{\mathbf{u}}[n] = \mathbf{u}^i[n]$, $\forall n \in [N]$, then solve problem \eqref{P2} to obtain $\{\mathbf{u}^{i+1}[n]\}_{n=1}^N$ and $\mathbf{K}^{i+1}$.
			\State Set $\tilde{\mathbf{K}} = \mathbf{K}^{i+1}$, then $\boldsymbol{\zeta}^{i+1}= (\frac{\sigma^2}{2}\mathbf{I}_{N\times N} + \tilde{\mathbf{K}}^\top  \boldsymbol{\rho}^{(t)}  \tilde{\mathbf{K}} )^{-1}  \tilde{\mathbf{K}}^\top  \boldsymbol{\rho}^{(t)} \boldsymbol{\upsilon}^{(t)}$.
			\State Update $i \leftarrow i+1 $.
			\Until The fractional decrease of the objective \eqref{std0} is below a threshold $\epsilon > 0$.
			\State Set $\boldsymbol{\zeta}^* = \boldsymbol{\zeta}^{i}$ and $\mathbf{u}^*[n]=\mathbf{u}^{i}[n]$, $\forall n \in [N]$.
		\end{algorithmic}
	\end{algorithm}

%

\section{Simulation Results}
In this section, we validate the effectiveness of the proposed UAV-PS assisted OA-FL scheme. We consider an FL system with $M=20$ devices on the ground in a large service area of $2\times2 $ km$^2$. The devices are randomly distributed in groups in order to simulate the users in clusters as in the village. The UAV-PS flies in an assumed fixed height $z=50$ m above the area and has a fixed starting and ending horizontal location $\mathbf{u}[0]=\mathbf{u}[N]=[885,-10]^\top$. The channel power gain, the noise power, and the maximum transmitting power are set as $\varrho=-60$ dB, $\sigma^2=-90$ dBm, and $P_0=0.32$ W. The maximum speed and the flying time interval of the UAV-PS are assumed as $V_{\mathrm{max}}=50$ m/s and $\delta=1$ s. The coverage distance threshold is set as $d_{\mathrm{thr}}=158$ m and after optimization the reconstruction to the binary $\alpha_m[n]$ follows the rounded principle based on the optimized results. And we set the optimization threshold $\epsilon=10^{-4}$.

The learning task is conducted over the MNIST dataset 
with $Q=60000$ training samples. We train a convolutional neural network (CNN) with two $5\times5$ convolution layers in which the first has $16$ channels, the second has $32$ channels, and both have a $2\times2$ max pooling, and with a fully connected layer of $50$ neurons and ReLu activation, and with a softmax output layer (total parameters $D=39408$). The loss function is the cross-entropy loss. The learning rate is set as $\eta=0.05$ with momentum $=0.5$ and the local updates include $5$ mini-batches of stochastic gradient descent (SGD). We conduct the experiments over $10$ Monte Carlo trials and $20$ Monte Carlo trials for i.i.d. datasets and non-i.i.d. datasets respectively. For the i.i.d. data condition, data samples are assigned evenly to all devices; for the non-i.i.d. data condition, each device randomly selects $5$ classes with $Q/5M$ samples for each selected class. The correlation matrices of the gradients are approximated by $\boldsymbol{\rho}^{(t)} = \frac{1}{D} \sum_{d=1}^D \mathbf{z}_{d}^{(t)} \mathbf{z}_{d}^{(t)}{}^\top$.

We first present the PS location/UAV-PS trajectory optimization results in the following simulation conditions: (1) Static PS without UAV assisted, which is located on the coordinate barycenter of all devices; (2) UAV-PS without optimization in a circular trajectory with the flying period $\Delta t=120$ s; (3) UAV-PS with optimization by Algorithm \ref{algorithm} and the flying period $\Delta t=120$ s; and (4) UAV-PS with optimization by Algorithm \ref{algorithm} and the flying period $\Delta t=80$ s. All the above simulations are on the i.i.d. datasets.
\begin{figure} 
	[!ht]
	\centering
	\vspace{-1.0em}
	\includegraphics[scale=0.29]{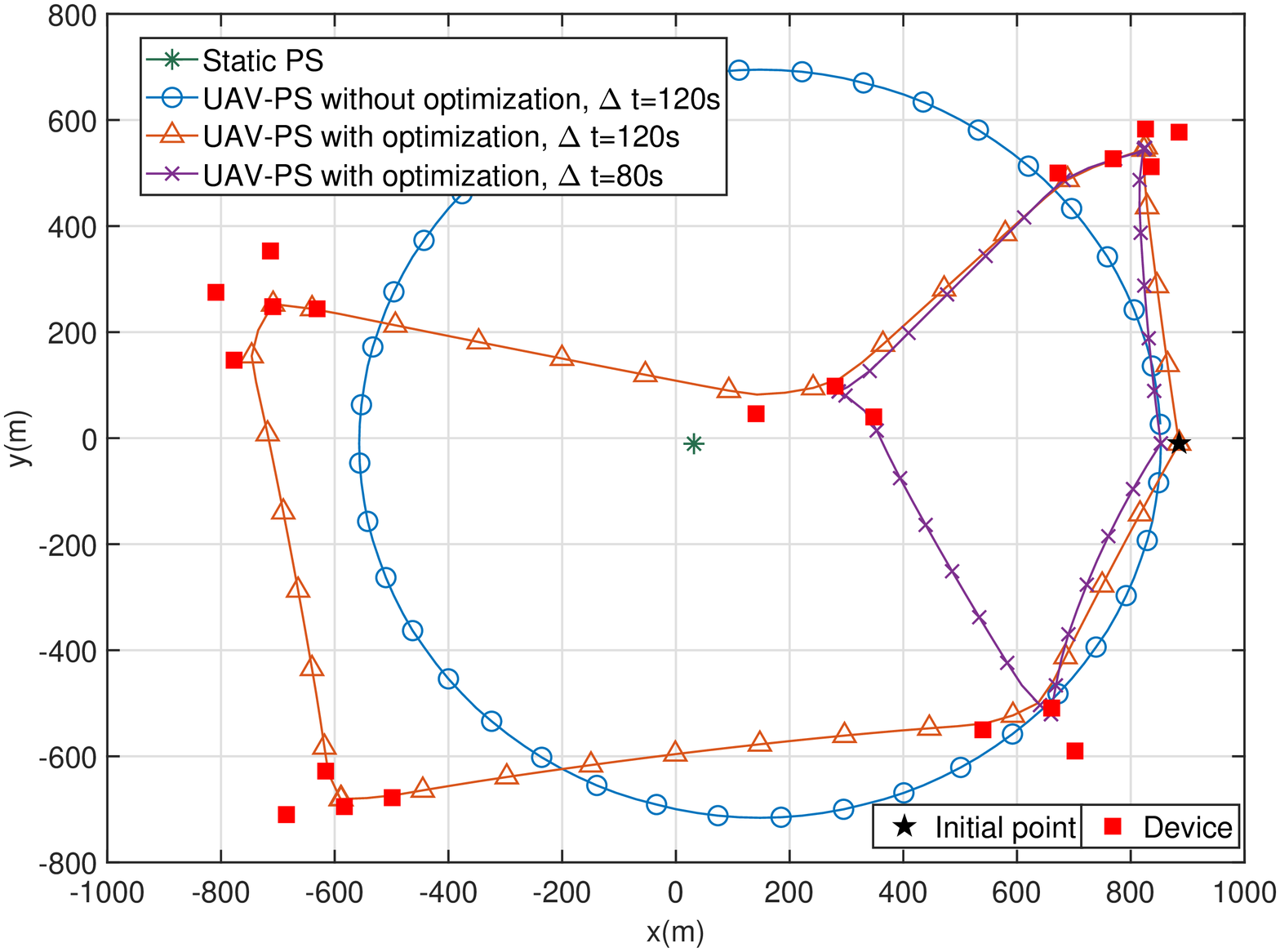}
	\caption{PS location/UAV-PS trajectory with $20$ devices.}
	\label{traj}
	\vspace{-0.2em}
\end{figure}

In Fig. \ref{traj}, by appropriate trajectory optimization, the UAV-PS with sufficient time, like $\Delta t=120$ s, has no need to fly right above each device but can serve for all on account of the AirComp technique to aggregate in clusters, which definitely verifies the effectiveness of the UAV assisted hierarchical over-the-air aggregation. However, when the flying time is limited, for instance, the UAV-PS with $\Delta t=80$ s, it fails to fly across the whole service area and thus only be optimized to fly above nearby clusters. 

\begin{figure} 
	[!ht]
	\centering
	\vspace{-1.0em}
	\includegraphics[scale=0.317]{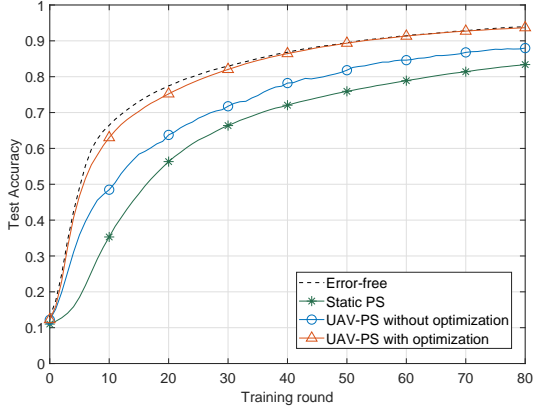}
	\caption{OA-FL test accuracy versus training rounds, i.i.d. data.}
	\label{accuiid}
	\vspace{-0.2em}
\end{figure}

\begin{figure} 
	[!ht]
	\centering
	\vspace{-1.0em}
	\includegraphics[scale=0.342]{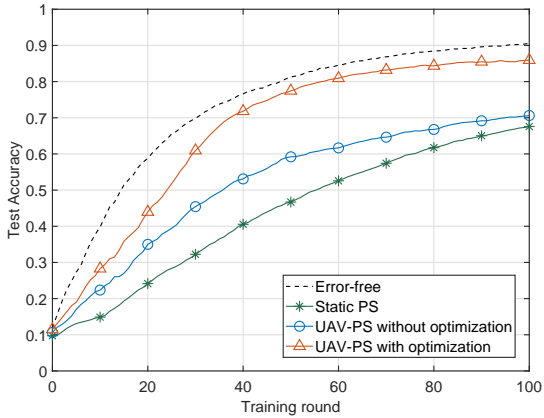}
	\caption{OA-FL test accuracy versus training rounds, non-i.i.d. data.}
	\label{accunon}
	\vspace{-0.2em}
\end{figure}

Figs. \ref{accuiid} and \ref{accunon} show the test accuracy of the UAV-assisted OA-FL system with i.i.d. data and non-i.i.d. data respectively in the following schemes: (1) Error-free bound with the PS aggregating free of error; (2) Static PS without UAV assisted, as the green location in Fig. \ref{traj}, whose limitation of coverage area is waived for serving all devices, and which conducts the aggregation design as \cite{zhong2021over} to relieve the straggler issue; (3) UAV-PS in a circular trajectory, as the blue line in Fig. \ref{traj}, $\Delta t=120$ s, which conducts the aggregation design optimization following \ref{optizeta}, and the empirical optimization for the radius and the center of the circle; and (4) UAV-PS with joint optimization for the trajectory and aggregation design by Algorithm \ref{algorithm}, as the orange line in Fig. \ref{traj},  $\Delta t=120$ s.

For the static PS though located on the barycenter, large communication distances for remote devices as the stragglers severely limit the quality of gradient aggregation, thus leading to bad learning results. The UAV-PS with trajectory optimization by Algorithm \ref{algorithm} is closer to the error-free baseline when compared to the one with a circular trajectory. This indicates that the latter, without the trajectory optimization for the UAV-PS, still constrains the training effect. 
And in Fig. \ref{accuiid} and \ref{accunon}, our scheme performs far better than any others on both i.i.d. and non-i.i.d. datasets, which demonstrates the effectiveness of the scheme. Further, we discover that on non-i.i.d. datasets, the learning accuracy is of higher dependence on communication and aggregation precision. Therefore our scheme is proven to have great potential on complex datasets and in complex learning scenarios.

%

\section{Conclusion}
In this paper, we proposed a UAV-PS assisted hierarchical aggregation for over-the-air FL in large service area scenarios and analyzed the convergence performance of the proposed system with gradient correlation in consideration. We formulated an MSE optimization problem to jointly optimize the UAV trajectory and the aggregation coefficients. An algorithm based on AO and SCA was proposed to solve the problem. Numerical results demonstrated the effectiveness of our scheme.


\bibliographystyle{IEEEtran}
\bibliography{IEEEabrv,mybib}

\end{document}